\newtheorem{thm}{Theorem}[section]
\newtheorem{lem}[thm]{Lemma}
\newtheorem{prop}[thm]{Proposition}
\newtheorem{cor}[thm]{Corollary}
\newtheorem{rmk}{Remark}
\newcommand{\ind}{\mbox{$\perp \kern-5.5pt \perp$}}
\newcommand{\mast}{\mathrm{MAST}}
\newcommand{\blue}[1]{#1}
\title{Bounds on the Expected Size of the Maximum Agreement Subtree} 
\author{
Daniel Irving Bernstein\thanks{Department of Mathematics, North Carolina State University, Raleigh, NC, USA; 
{\tt\{dibernst,celong2,smsulli2\}@ncsu.edu}.}
\and
Lam Si Tung Ho\thanks{\blue{Department of Biostatistics}, University of California, Los Angeles, CA, USA;   
{\tt lamho@ucla.edu}}
\and
Colby Long\footnotemark[1]
\and
Mike Steel\thanks{Biomathematics Research
Centre, University of Canterbury, Christchurch, New Zealand
{\tt mike.steel@canterbury.ac.nz}}
\and 
Katherine St.~John\thanks{Department of Mathematics and Computer Science, Lehman College, City University of New York, Bronx, NY, USA; {\tt katherine.stjohn@lehman.cuny.edu}; 
Department of Invertebrate Zoology, American Museum of Natural History, New York, NY, USA.}
\and
Seth Sullivant\footnotemark[1]
}
\begin{document}
\maketitle

\begin{keywords}Random trees, agreement subtrees,  Yule-Harding distribution.\end{keywords}


\begin{abstract}
We \blue{prove} lower bounds on the expected
size of the maximum agreement subtree of two random binary phylogenetic 
trees under both the uniform distribution and Yule-Harding
distribution \blue{and prove upper bounds under the Yule-Harding distribution}.  This positively answers a question
posed in earlier work.  Determining tight upper and lower bounds
remains an open problem.
\end{abstract}


\section{Introduction}

Leaf-labelled trees are a canonical model for evolutionary histories of sets of species \cite{sem}.
Let $T_{1}$ and $T_{2}$ be two trees with the
same set of leaf labels $X$ (interior vertices are unlabelled).  
\blue{Following \cite{sem}, a \emph{rooted} tree is a tree that has exactly one distinguished vertex called the \emph{root}.}
A subset $S \subseteq X$
yields an {\em agreement subtree} of $T_{1}$ and $T_{2}$
if $T_{1}|_{S}  = T_{2}|_{S}$ where for a tree
$T$, $T|_{S}$ is the tree restricted to the
leaf label set $S$ and is obtained by supressing
all vertices of degree $2$ \blue{(excepting the root, if $T_1$ and $T_2$ are rooted)}.  A {\em maximum agreement
subtree} is a subtree that is an agreement subtree
of the maximal size in $T_{1}$ and $T_{2}$ (see Figure~\ref{fig:MAST}). We note that
there might be multiple  maximum agreement subtrees for
the pair $T_{1}$ and $T_{2}$.  Let $\mast(T_{1}, T_{2})$
denote the number of leaves of 
a maximum agreement subtree of $T_{1}$ and $T_{2}$, which can be computed in polynomial time in $|X|$ \cite{war93}.

Let $T_{1}$ and $T_{2}$ be two unrooted binary trees with $n$ leaves. It is known that $\mast(T_{1}, T_{2}) = \Omega( \sqrt{\log n})$
for any pair of trees \cite{Martin2013},
and there is always a pair of trees $T_{1}$ and $T_{2}$ such that
$\mast(T_{1}, T_{2}) = O(\log n)$.  Closing the gap
on this worst case behaviour is a lingering open problem.
This worst case behaviour is quite different if the two input trees are
rooted (for rooted trees any agreement subtree is also required
to respect the induced rooting); in this case, it is easily seen
that, for any $n\geq 2$ there is always a pair of trees $T_{1}$ and $T_{2}$ such that $ \mast(T_{1}, T_{2}) =2$.

Of practical interest is to understand what the expected size
of the maximum agreement subtree when
$T_{1}$ and $T_{2}$ are drawn from a suitable distribution
on the set of all binary trees.
For example, de Vienne, Giraud, and Martin \cite{deVienne2007} proposed
using the maximum agreement subtree as a measure of the
congruence between two trees.  Understanding the distribution
of this statistic can be used in hypothesis tests of the null
hypothesis that the two trees were generated at random \cite{lap}. 
For example, the deviation from the null hypothesis between
a host tree and a parasite tree could be used as evidence
of co-speciation \cite{hous}.

The mathematical study of the distribution of
the size of the maximum agreement subtree was initiated in the work 
of Bryant, McKenzie, and Steel \cite{Bryant2003}.
They specifically focused on the expectation
$f_{u}(n)  =  \mathbb{E}[\mast(T_{1}, T_{2})]$ and
$f_{YH}(n) = \mathbb{E}[\mast(T_{1}, T_{2})]$
where in first case the trees are drawn independently from the uniform distribution
on \blue{rooted} binary phylogenetic trees with $n$ leaves, 
and the second case, from the Yule-Harding distribution on \blue{rooted} binary phylogenetic trees with $n$ leaves \blue{(for a formal definition of this distribution, see \cite{sem}, Section 2.5)}.
Their simulations, for trees with up to $n = 1024$ leaves, suggest
that under both the uniform distribution on binary trees
and the Yule-Harding distribution \cite{harding1971},
the expected size of the maximum agreement subtree is
of order $\Theta( n^{a})$ with $a \approx 1/2$,
and they also proved that $f_{u}(n)  =  O(n^{1/2})$.

In the special case when both random trees are
caterpillar trees, finding the maximum agreement
subtree is essentially equivalent to finding the
longest increasing subsequence in a random permutation.
This problem has a long history, and it is well-known
that the expected size of the longest increasing subsequence is
asymptotically $2 \sqrt{n}$, and that the (appropriately rescaled) distribution
of the longest increasing subsequence is the Tracy-Widom distribution 
(see \cite{Aldous1999} for a survey of results).
The distribution of the maximum agreement
subtree is a natural extension of the longest increasing
subsequence problem to trees.

Bryant, McKenzie, and Steel \cite{Bryant2003} posed the question, and again suggested the problem
at the 2007 Newton Institute program on Phylogenetics,
of finding any exponent $a>0$ such that
$f_{u}(n)  =  \Omega( n^{a})$ or $f_{YH}(n) = \Omega(n^{a})$.
The main results of this note are, \blue{for rooted binary trees, to derive the conjectured (power law type)} lower bounds for the expected size of the maximum agreement subtree 
for both the uniform and Yule-Harding distributions,
 and an upper bound of the form $O(n^{1/2})$
for the Yule-Harding distribution.

\blue{ Note that the uniform and Yule-Harding distribution satisfy two fundamental properties, namely {\em exchangeability} and {\em sampling consistency}. Exchangeability means that if two trees $T_1$ and $T_2$ 
differ only by a permutation of the leaves, then $\mathbb{P}_s(T_1) =\mathbb{P}_s(T_2)$.  
Sampling consistency is the condition that for any subset $S$ of $X$ if we generate a rooted binary tree $T$ on leaf set $X$  under either model (uniform or Yule-Harding) then the induced tree $T|_S$
is described by the same model (uniform or Yule-Harding, respectively).
For further details see \cite{Bryant2003}.
}

\begin{figure}
\centering
\includegraphics[height=2in, width=4in]{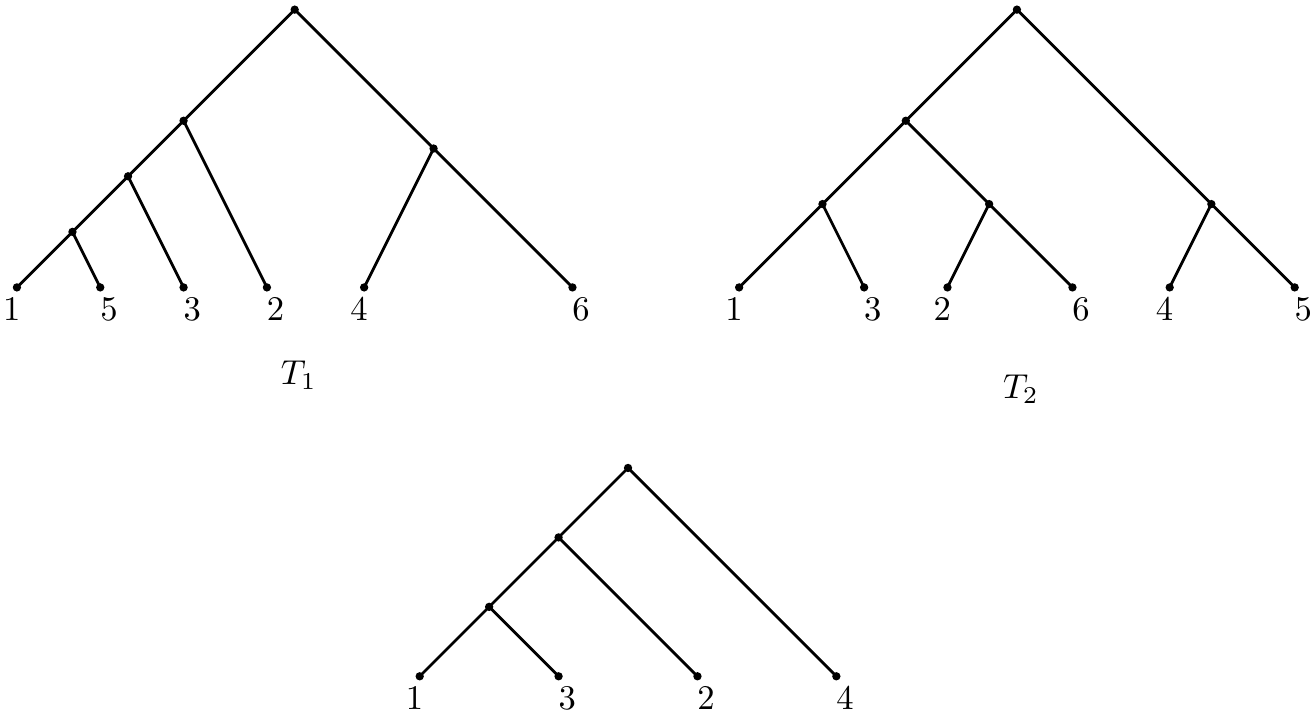}
\caption{Two rooted phylogenetic trees, $T_1$ and $T_2$, on 6 leaves and a maximum agreement subtree (MAST) for $T_1$ and $T_2$.  The MAST illustrated is a {\em caterpillar} with leaves encoding the subsequence, $(1,3,2,4)$. }
\label{fig:MAST}
\end{figure}


\section{Uniform Trees}

To show our lower bound results for \blue{rooted} binary trees chosen from a uniform distribution, we rely on classical results on the expected largest increasing subsequence in a \blue{random} permutation of numbers.
For trees of size $n$ \blue{(both unrooted and rooted)} we show that the expected length of a caterpillar subtree is $\Omega(\sqrt{n})$.
We can then show that \blue{for two trees $T_1, T_2$ chosen independently and uniformly at random from $RB(n)$} there is a subset, $S' \subseteq [n]$ of $\Omega(n^{1/4})$ leaves which induce rooted caterpillars of $T_1$ and $T_2$.  Restricting to this subset $S'$, we can view $T_{1}|_{S'}$ and $T_{2}|_{S'}$ as permutations of the elements of $S'$ and apply the classical results of Aldous and Diaconis \cite{Aldous1999} to yield a common subsequence of length ${|S'|}^{1/2} = \Omega((n^{1/4})^{1/2}) = \Omega(n^{1/8})$.

Let $RB(n)$ denote the set of rooted binary phylogenetic trees
with leaf label set $[n] := \{1,2,3, \ldots, n\}$.
Similarly, $B(n)$ denotes the set of unrooted binary phylogenetic
trees with leaf label set $[n]$.
Note that $|RB(n)|  = (2n -3)!! = 1 \times 3 \times 5  \times  \cdots \times (2n -3) $, and $b(n) = |B(n)| = (2n-5)!!$.
In this section, we consider the uniform distribution on 
$RB(n)$, and the function $f_u(n)  =  \mathbb{E}[ \mast(T_1, T_2)]$
where $T_1$ and $T_2$ are generated uniformly and independently from $RB(n)$.

\begin{thm}\label{Ulowerbound}
    For any \blue{$\lambda < 2^{1/4}(1 - \frac{e^{-1/4}}{2})^2 \approx .443$} there is a value $m$ so that, for all $n \geq m$,
    $$
    f_u(n)  \geq  \lambda  n^{1/8}.
    $$
\end{thm}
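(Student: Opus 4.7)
The plan is to follow the three-step outline stated just before the theorem. First, I would prove a lemma showing that if $T$ is drawn uniformly from $RB(n)$, then the expected size of the largest caterpillar subtree of $T$ is at least $c_1 \sqrt{n}$ for an explicit constant $c_1 > 0$. The construction is concrete: fix a leaf $\ell$ and trace the path from $\ell$ to the root; from each sibling subtree off this path pick any single leaf. Together with $\ell$ these leaves induce a caterpillar of size $1 + \mathrm{depth}(\ell)$, and the expected depth of a uniformly chosen leaf in a uniform $T \in RB(n)$ is $\Theta(\sqrt{n})$ by a classical computation. A Markov-type inequality applied to $n$ minus the caterpillar size then upgrades the in-expectation bound to a with-probability statement of the form ``with probability at least $1 - e^{-1/4}/2$, the largest caterpillar subtree has at least $c_1' \sqrt{n}$ leaves''; the appearance of $e^{-1/4}$ in the theorem's constant suggests the specific Markov threshold to pick.

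Next, I would chain two applications of this lemma using the sampling-consistency property of the uniform distribution. With probability at least $1 - e^{-1/4}/2$, the tree $T_1$ contains a caterpillar subtree on a leaf set $S$ with $|S| \geq c_1' \sqrt{n}$. Conditional on $S$, the induced tree $T_2|_S$ is uniform on $RB(S)$ by sampling consistency, so a second application of the lemma yields, with probability at least $1 - e^{-1/4}/2$, a caterpillar subtree $T_2|_{S'}$ with $S' \subseteq S$ and $|S'| \geq c_1' \sqrt{|S|} = \Omega(n^{1/4})$. Since $S' \subseteq S$ and $T_1|_S$ is a caterpillar, $T_1|_{S'}$ is also a caterpillar, so with probability at least $(1 - e^{-1/4}/2)^2$ both $T_1|_{S'}$ and $T_2|_{S'}$ are caterpillars on a common leaf set of size $\Omega(n^{1/4})$.

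Finally, I would invoke the Aldous-Diaconis asymptotic. Labelling the leaves of $S'$ by their positions along $T_1|_{S'}$ and recording their positions along $T_2|_{S'}$ produces a permutation of $|S'|$ elements, which is uniformly random by exchangeability. Any increasing subsequence yields an agreement caterpillar on $S'$, so $\mast(T_1,T_2) \geq \mast(T_1|_{S'}, T_2|_{S'})$ is bounded below by the length of the longest increasing subsequence of a uniformly random permutation on $|S'|$ letters, whose expected length is asymptotic to $2\sqrt{|S'|}$. Combining the probability factor $(1 - e^{-1/4}/2)^2$ from the two Markov applications with the $2\sqrt{c_1'}$ coming from Aldous-Diaconis then produces the $\lambda n^{1/8}$ bound.

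The main obstacle is not the architecture, which is essentially dictated by the outline preceding the theorem, but the careful bookkeeping of constants: recovering precisely $2^{1/4}(1 - e^{-1/4}/2)^2$ requires choosing the Markov thresholds optimally, tracking the leading-order constant in the caterpillar-size lemma so that after nesting two applications the Aldous-Diaconis factor $2$ becomes $2^{1/4}$, and absorbing all vanishing lower-order correction terms into the strict inequality $\lambda < 2^{1/4}(1 - e^{-1/4}/2)^2$.
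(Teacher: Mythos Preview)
Your overall architecture---find a large caterpillar in $T_1$, use sampling consistency to find a large sub-caterpillar in $T_2|_S$, then apply Aldous--Diaconis---matches the paper exactly, and the chaining and permutation steps are fine. The gap is in your first step: a Markov-type inequality cannot convert ``expected caterpillar size $\geq c_1\sqrt{n}$'' into ``caterpillar size $\geq c_1'\sqrt{n}$ with probability bounded away from zero''. Markov on $n-C$ gives $\mathbb{P}(C\leq c_1'\sqrt{n})=\mathbb{P}(n-C\geq n-c_1'\sqrt{n})\leq \frac{n-\mathbb{E}[C]}{n-c_1'\sqrt{n}}\to 1$, which is vacuous; and the depth of a fixed leaf in a uniform tree has variance of order $n$, so no second-moment method will rescue this either. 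Knowing only the expectation of the caterpillar size is simply not enough information to get a constant lower bound on $\mathbb{P}(C\geq c\sqrt{n})$.

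The paper circumvents this by computing the \emph{exact} distribution of the path length $d_n(i,j)$ between two fixed leaves in a uniformly random unrooted tree (a combinatorial count via ordered forests of rooted binary trees), and then showing by a direct Stirling-type estimate that $\mathbb{P}(d_n(i,j)\geq\sqrt{n})\to 1-\tfrac{e^{-1/4}}{2}$. The constant $1-e^{-1/4}/2$ is not a Markov threshold at all; it falls out of the asymptotics of the explicit formula for $\mathbb{P}(d_n(i,j)=m)$. Since one of $i,j$ must be at rooted depth $\geq d_n(i,j)/2$, this yields a caterpillar of size $\geq \tfrac{1}{2}\sqrt{n}$ in $T_1$ with probability $\geq c$, and a second application on $T_2|_S$ gives $|S'|\geq \tfrac{1}{2\sqrt{2}}\,n^{1/4}$; the factor $2^{1/4}$ in the final constant is then just $2\sqrt{1/(2\sqrt{2})}$ from Aldous--Diaconis. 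So to complete your proof you need to replace the Markov step with an explicit distributional calculation (or an equivalent limit-theorem argument) for the leaf depth or the leaf-to-leaf distance.
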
 

Let $T$ be an unrooted binary tree with leaves labeled by $[n]$.
Then for $i,j \in [n]$, $d_n(i,j)$ denotes the number of edges on the unique path from leaf $i$ to leaf $j$.

\begin{prop}\label{exact}
	\blue{Let $T$ selected uniformly at random from $B(n)$.
	For leaves $i,j \in [n], i \neq j$, the probability that $d_n(i,j) = m$
	is:}
	\[
		\mathbb{P}(d_n(i,j)=m)=\frac{(n-2)!}{(2n-4)!}\cdot\frac{2^{m-1}(m-1)(2n-m-4)!}{(n-m-1)!}.
	\]
\end{prop}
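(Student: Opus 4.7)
My plan is to count the trees $T \in B(n)$ with $d_n(i,j)=m$ and divide by $(2n-5)!!$. The structural observation at the heart of the argument is that any such tree decomposes uniquely as follows: the unique $i$-to-$j$ path has $m$ edges and hence $m-1$ internal vertices $v_1,\dots,v_{m-1}$ (in order from $i$ to $j$), each of degree~$3$; the third edge at each $v_k$ roots a rooted binary phylogenetic subtree on a nonempty $S_k\subseteq[n]\setminus\{i,j\}$, and the $S_k$ partition $[n]\setminus\{i,j\}$ into an ordered sequence of nonempty blocks. This data recovers the tree uniquely, so using the standard count $(2a-3)!!$ of rooted binary phylogenetic trees on $a\geq 1$ leaves (with $(-1)!!:=1$), I obtain
\[
N(n,m) := |\{T\in B(n): d_n(i,j)=m\}| = \sum_{\substack{a_1+\cdots+a_{m-1}=n-2 \\ a_k\geq 1}} \binom{n-2}{a_1,\dots,a_{m-1}} \prod_{k=1}^{m-1}(2a_k-3)!!.
\]

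The next step, and where the real work lies, is bringing this sum into closed form. My preferred route is generating functions: the EGF for rooted binary phylogenetic trees is
\[
T(x):=\sum_{a\geq 1}(2a-3)!!\,\frac{x^a}{a!} = 1-\sqrt{1-2x},
\]
derivable from the rooted decomposition $T(x)=x+T(x)^2/2$. Consequently $N(n,m) = (n-2)!\,[x^{n-2}]T(x)^{m-1}$, and since $y=T(x)$ satisfies $x=y(1-y/2)$, Lagrange inversion with $F(y):=1/(1-y/2)$ gives
\[
[x^{n-2}]T(x)^{m-1} = \frac{m-1}{n-2}\cdot\frac{1}{2^{n-m-1}}\binom{2n-m-4}{n-m-1}.
\]

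The main obstacle is this coefficient extraction; the rest is bookkeeping. Substituting back and cancelling $(n-3)!$ produces $N(n,m)=(m-1)(2n-m-4)!/(2^{n-m-1}(n-m-1)!)$, and dividing by $(2n-5)!!=(2n-4)!/(2^{n-2}(n-2)!)$ while collecting $2^{n-2}/2^{n-m-1}=2^{m-1}$ yields exactly the stated formula. A generating-function-free backup I would keep in reserve is induction on $n$ via the sequential uniform construction of $T$ (attach each new leaf to a uniformly chosen edge of the current tree), which yields the two-term recursion
\[
\mathbb{P}(d_{n+1}(i,j)=m) = \tfrac{m-1}{2n-3}\,\mathbb{P}(d_n(i,j)=m-1) + \tfrac{2n-3-m}{2n-3}\,\mathbb{P}(d_n(i,j)=m);
\]
verifying that the claimed closed form satisfies this recursion is elementary but tedious algebra.
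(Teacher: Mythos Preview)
Your proof is correct and is essentially the same argument as the paper's: both set up the bijection between trees with $d_n(i,j)=m$ and ordered forests of $m-1$ rooted binary trees on the remaining $n-2$ leaves, and both evaluate the resulting count via the exponential generating function $1-\sqrt{1-2x}$ together with Lagrange inversion applied to the functional equation $x=y(1-y/2)$. The paper phrases the intermediate step in terms of a known formula for the number $N(r,k)$ of unordered forests (multiplied by $(m-1)!$), whereas you write out the multinomial sum directly, but this is only a cosmetic difference.
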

\begin{proof}
	Let $D(m,n)$ denote the number of $[n]$-trees where $d_n(i,j)=m$.
	If $T$ is an unrooted $[n]$-tree with $d_n(i,j) = m$,
	then we can draw $T$ as follows:
	
	\begin{center}
        	\includegraphics{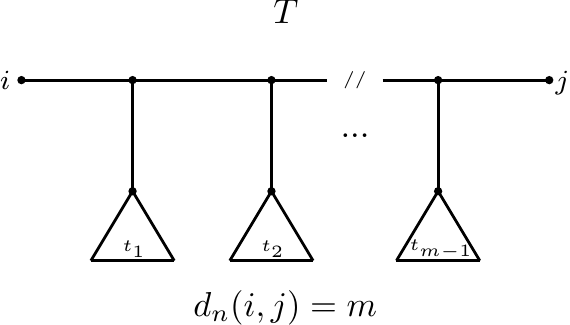}
	\end{center}

	\noindent thus giving us a bijection between the set of trees where $d_n(i,j) = m$,
	and the set of ordered forests \blue{consisting of rooted binary} trees on $n-2$ leaves (as noted by \cite{Steel1993} for a different calculation).
	Now, the set of ordered forests on $m-1$ rooted binary trees and $n-2$ leaves is just $(m-1)! N(n-2, m-1)$
	where $N(r, k)$ is the number of unordered forests of $k$ \blue{rooted binary} trees and $r$ leaves.  Now,
	$N(r, k) = \frac{(2r-k-1)!}{(r-k)!(k-1)!2^{r-k}}$ for $r \geq k$ and $0$ if $r<k$, as stated in Lemma 4 of \cite{car}. This result can be derived by observing that:
	\begin{equation}
	\label{nr}
		N(r,k) = r!\cdot\left[x^{r}\right]B(x)^{k},
\end{equation}
	where $B(x)= 1-\sqrt{1-2x}$ is the  exponential generating function for the number of \blue{rooted} binary trees on $k$ non-root leaves, and applying the Lagrange inversion
	formula, together with the identity $B(x) = x(1-\frac{1}{2}B(x))^{-1}$, to determine the RHS of (\ref{nr}) (for further details, see \cite{sem}, Section 2.8). 
	This gives the following expression for $D(m,n)$:
	\[
		D(m,n) = \frac{(m-1)\cdot(2n-m-4)!}{2^{n-m-1}\cdot(n-m-1)!}.
	\]
	Dividing the above by $b(n) = (2n-5)!!=\frac{(2n-4)!}{(n-2)!\cdot 2^{n-2}}$ gives the desired formula for $P(d_n(i,j)=m)$.
\end{proof}

Using Proposition~\ref{exact}, we calculate the probability that the path length between two leaves, $i$ and $j$ exceeds $\sqrt{n}$:
\begin{lem} \label{bound}
\label{c}
$\displaystyle \lim_{n \to \infty} \mathbb{P}(d_n(i,j) \geq \sqrt{n}) \geq \left (1 - \dfrac{e^{-1/4}}{2} \right ) = c\approx 0.61.$
\end{lem}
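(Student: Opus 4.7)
The plan is to derive an asymptotic formula for $\mathbb{P}(d_n(i,j)=m)$ that is uniformly valid for $m\le\sqrt n$, and then convert the cumulative probability into a Gaussian-type Riemann integral. Starting from Proposition~\ref{exact}, I would absorb the factor $2^{m-1}$ into the falling factorial $(n-2)(n-3)\cdots(n-m)$ in the numerator and cancel one common factor of $2n-4$ between numerator and denominator to obtain
\[
\mathbb{P}(d_n(i,j)=m)=\frac{m-1}{2n-m-3}\prod_{k=2}^m\frac{2(n-k)}{2n-k-2}.
\]
Each ratio in the product equals $1-\frac{k-2}{2n-k-2}$, so applying $\log(1-x)=-x+O(x^2)$ and summing gives
\[
\log\prod_{k=2}^m\frac{2(n-k)}{2n-k-2}=-\sum_{k=2}^m\frac{k-2}{2n-k-2}+O\!\left(\tfrac{m^3}{n^2}\right)=-\frac{m^2}{4n}+O\!\left(\tfrac{1}{\sqrt n}\right),
\]
uniformly for $m\le\sqrt n$. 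Combined with $\frac{m-1}{2n-m-3}=\frac{m}{2n}(1+o(1))$, this yields the uniform asymptotic
\[
\mathbb{P}(d_n(i,j)=m)=\frac{m}{2n}\,e^{-m^2/(4n)}(1+o(1)),\qquad 1\le m\le\sqrt n.
\]

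Summing over $m$ from $1$ to $\lfloor\sqrt n\rfloor-1$ and interpreting the result as a Riemann sum of step $1/\sqrt n$ under the substitution $x=m/\sqrt n$, I obtain
\[
\mathbb{P}(d_n(i,j)<\sqrt n)\ \longrightarrow\ \int_0^1\frac{x}{2}\,e^{-x^2/4}\,dx=1-e^{-1/4}.
\]
Hence $\lim_{n\to\infty}\mathbb{P}(d_n(i,j)\ge\sqrt n)=e^{-1/4}$. Since $e^{-1/4}>2/3$, we have $3e^{-1/4}/2>1$, which rearranges to $e^{-1/4}>1-e^{-1/4}/2$, yielding the claimed lower bound (in fact a somewhat stronger one).

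The main obstacle is controlling the error in the uniform asymptotic near the upper endpoint $m\approx\sqrt n$, where the quadratic Taylor remainder accumulates to $O(m^3/n^2)=O(1/\sqrt n)$: one has to keep this remainder explicit, check that it is uniform in $m$, and verify that it survives the passage from the discrete sum to the Riemann integral (where the sum-to-integral discretization error contributes an additional $O(1/\sqrt n)$). A secondary, minor point is that the $m=1$ boundary term vanishes automatically because of the $(m-1)$ factor, so no special treatment is needed there.
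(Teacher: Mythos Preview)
Your argument is correct and in fact yields more than the lemma asks for: you obtain the exact limit $\lim_{n\to\infty}\mathbb{P}(d_n(i,j)\ge\sqrt n)=e^{-1/4}\approx 0.779$, from which the stated bound $1-e^{-1/4}/2\approx 0.611$ follows as you note.

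The paper takes a shorter but cruder route. Rather than deriving a uniform local limit theorem and passing to a Riemann integral, it first observes that $\mathbb{P}(d_n(i,j)=m)$ is monotone increasing in $m$ on the range $m\le\sqrt{2n}$, so that
\[
\mathbb{P}(d_n(i,j)<\sqrt n)\ \le\ \sqrt n\cdot\mathbb{P}(d_n(i,j)=\sqrt n),
\]
and then evaluates this single term via Stirling's formula to get the asymptotic value $e^{-1/4}/2$. In your notation this is exactly the right endpoint of the Riemann sum, $\tfrac{1}{2}e^{-1/4}$, multiplied by the full width of the interval; so the paper is effectively bounding the integral $\int_0^1\tfrac{x}{2}e^{-x^2/4}\,dx$ by the rectangle of height $\tfrac{1}{2}e^{-1/4}$, whereas you compute it exactly as $1-e^{-1/4}$. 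The trade-off is that the paper's argument avoids the uniformity bookkeeping you flag at the end (the $O(m^3/n^2)$ remainder and the sum-to-integral error), at the cost of a weaker constant. Your approach, once those uniform errors are nailed down, gives the sharp limit and would propagate a better constant through Lemma~\ref{caterpillar} and Theorem~\ref{Ulowerbound}.
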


\begin{proof}
	For a fixed $n$, 
	\[
		\mathbb{P}(d_n(i,j) = m + 1) - \mathbb{P}(d_n(i,j) = m ) = \frac{2^{m - 1}(2n - m -5)!(n - 2)!}{ (n - m - 1)!(2n - 4)!}(-m^2 + m + 2n - 4),
	\]
	which is positive whenever $m \leq \sqrt{2n}$.
	Therefore, we have, 
	\begin{align*}
		\mathbb{P}(d_n(i,j) < \sqrt{n}) &= \displaystyle \sum_{m = 1}^{\blue{\lceil \sqrt{n} - 1 \rceil}} \mathbb{P}(d_n(i,j) = m ) \\
		& \leq \sqrt{n}(\mathbb{P}(d_n(i,j) = \sqrt{n}). \\
	\end{align*}
	Using Stirling's approximation for all factorials we have:
	\begin{align*}
	\sqrt{n}(\mathbb{P}(d_n(i,j) = \sqrt{n}) &\sim \frac{e}{2} \frac{n 2^{\sqrt{n}} (n - 2)^{n - 2} (2n - \sqrt{n} - 4)^{2n - \sqrt{n} - 4}}{(2n - 4)^{2n - 4}(n - \sqrt{n} -1)^{n - \sqrt{n} - 1}} \\
	& \sim \frac{e}{2} \left ( \frac{2n - \sqrt{n} - 4}{2n - 4} \right )^{2n} \left ( \frac{n-2}{n - \sqrt{n} - 1} \right )^{n}  \left (\frac{2n - 2\sqrt{n} - 2}{2n - \sqrt{n} - 4} \right )^{\sqrt{n}} \\
	& \sim \frac{e}{2} \left( 1 - \frac{3n - 8}{4(n-2)(n - \sqrt{n} - 1)} \right)^n \left ( 1 - \frac{\sqrt{n} - 2}{2n - \sqrt{n} - 4} \right )^{\sqrt{n}}\\
	& \sim \frac{e}{2} e^{-3/4} e^{-1/2} = \frac{e^{-1/4}}{2}.
	\end{align*}
	Hence, $\displaystyle \lim_{n \to \infty}  \mathbb{P}(d_n(i,j) < \sqrt{n}) \leq \dfrac{e^{-1/4}}{2}.$
	
	Since $\mathbb{P}(d_n(i,j) \geq \sqrt{n}) = 1 -  P(d_n(i,j) < \sqrt{n})$, we have  
	\[
		\displaystyle \lim_{n \to \infty} \mathbb{P}(d_n(i,j) \geq \sqrt{n}) \geq \left (1 - \dfrac{e^{-1/4}}{2} \right ).
	\]
\end{proof}

Recall that a \emph{rooted caterpillar} on $n$ leaves is any rooted binary phylogenetic tree for which the induced subtree on the interior vertices forms a path graph with the root at one end of the path. We show that two trees, chosen uniformly and independently from rooted binary trees on $n$ leaves, have a common rooted caterpillar of height at least $\Omega(n^{1/4})$ leaves:

\begin{lem}
\label{caterpillar}
 Let $T_1, T_2$ be rooted $n$-leaf trees chosen uniformly and independently from $RB(n)$. If $n$ is sufficiently large, \blue{then} with probability greater than $(1 - \frac{e^{-1/4}}{2})^2 = c^{2}$, there exists $S' \subset [n]$ with $|S'| \geq \frac{1}{2\sqrt{2}}n^{1/4}$ such that both $T_{1}|_{S'}$ and $T_{2}|_{S'}$ are rooted caterpillar trees.
\end{lem}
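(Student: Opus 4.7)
The plan is to apply Lemma~\ref{bound} to the pair of leaves $i=1,\,j=2$ in each of $T_1$ and $T_2$ and exploit the independence of the two trees. Since forgetting the root of a uniformly random $T \in RB(n)$ yields a uniformly random $T' \in B(n)$ (because $|RB(n)| = (2n-3)\,|B(n)|$ and each unrooted binary tree on $n$ leaves admits $2n-3$ rootings, one per edge), Lemma~\ref{bound} applied to each tree gives $\mathbb{P}(d_{T_s}(1,2) \geq \sqrt{n}) \to c$ as $n \to \infty$ for each $s \in \{1,2\}$. By independence of $T_1$ and $T_2$, the event $E := \{d_{T_1}(1,2) \geq \sqrt{n}\} \cap \{d_{T_2}(1,2) \geq \sqrt{n}\}$ satisfies $\mathbb{P}(E) > c^2$ for all $n$ sufficiently large.

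On $E$, the $1$--$2$ path in $T_s$ has at least $\sqrt{n}-1$ internal vertices, each carrying a pendant subtree containing at least one leaf; label these pendants $A_1, \dots, A_a$ for $T_1$ and $B_1, \dots, B_b$ for $T_2$, with $a, b \geq \sqrt{n}-1$, and note that each family partitions $[n] \setminus \{1,2\}$. I seek a set $S' \supseteq \{1,2\}$ satisfying $|S' \cap A_p|,|S' \cap B_q| \leq 1$ for every $p, q$; such an $S'$ induces a caterpillar in each of $T_1, T_2$ with $1$ and $2$ at the endpoints of the spine. To upgrade to \emph{rooted} caterpillars, I also include one leaf from the ``outside'' pendant at the MRCA of $\{1,2\}$ in each tree (handling the edge case where this pendant is empty---i.e., when the MRCA is the root---by restricting the construction to a single side of the $1$--$2$ path, which costs only a constant factor in $|S'|$); this places the root of each restricted tree at one end of its spine.

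Identifying each leaf of $[n] \setminus \{1,2\}$ with an edge between its $T_1$- and $T_2$-pendant turns the task into finding a matching of size at least $\lceil n^{1/4}/(2\sqrt{2}) \rceil - O(1)$ in the resulting bipartite multigraph on $[a] \sqcup [b]$. I build such a matching via a birthday-paradox estimate: pick $k := \lceil n^{1/4}/(2\sqrt{2}) \rceil$ pendants of $T_1$ at random and one uniform leaf from each. Conditional on the tree shapes, the leaf labelings of $T_1$ and $T_2$ are independent uniform bijections, so the $T_2$-pendants of these $k$ leaves behave approximately like $k$ samples from a distribution on $\{1, \dots, b\}$ weighted by $|B_q|/(n-2)$; the expected number of pendant collisions is bounded by $\binom{k}{2} \sum_q |B_q|^2/(n-2)^2$, which, using $k^2 \leq \sqrt{n}/8$ and $b \geq \sqrt{n}-1$, is a small constant once $\sum_q |B_q|^2 = O(n^{3/2})$. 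Markov's inequality then produces a collision-free selection with conditional probability close to $1$.

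The main obstacle is controlling $\sum_q |B_q|^2$ (and symmetrically $\sum_p |A_p|^2$): a single huge pendant would inflate the birthday bound. Resolving this calls for a concentration estimate for the sum of squared pendant sizes. Conditional on $d_{T_2}(1,2)=m$, the pendant sizes form a random ordered composition of $n-2$ into $m-1$ positive parts, with distribution weighted by $\prod_q N(|B_q|,1)$ (with $N$ as in the proof of Proposition~\ref{exact}); standard tail estimates for such random compositions of rooted binary trees yield $\sum_q |B_q|^2 = O(n^{3/2})$ with conditional probability tending to $1$. Combining this high-probability matching estimate with $\mathbb{P}(E) > c^2$ gives the claimed probability bound.
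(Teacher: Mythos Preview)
Your approach is genuinely different from the paper's, and considerably harder. The paper's proof never tries to intersect the two $1$--$2$ paths directly. Instead it exploits \emph{sampling consistency} of the uniform distribution: first find $S\subset[n]$ with $|S|\ge\tfrac12\sqrt{n}$ such that $T_1|_S$ is a rooted caterpillar (probability $>c$, via the height argument you sketch); then, since $S$ depends only on $T_1$ and $T_2$ is independent of $T_1$, the restricted tree $T_2|_S$ is itself uniform in $RB(|S|)$. Now apply Lemma~\ref{bound} a second time \emph{inside} $T_2|_S$ to get $S'\subset S$ with $|S'|\ge\tfrac12\sqrt{|S|}\ge\tfrac{1}{2\sqrt 2}n^{1/4}$ and $T_2|_{S'}$ a rooted caterpillar. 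Since any restriction of a caterpillar is a caterpillar, $T_1|_{S'}$ is automatically one too. Independence gives the $c^2$ bound. No matching problem, no pendant-size analysis.

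By contrast, your route forces you to control $\sum_q |B_q|^2$ conditional on $d_{T_2}(1,2)\ge\sqrt n$, and this is where your argument has a real gap. You assert that ``standard tail estimates for such random compositions of rooted binary trees yield $\sum_q|B_q|^2=O(n^{3/2})$ with conditional probability tending to $1$,'' but this is not a standard result you can cite: the composition is weighted by $\prod_q(2|B_q|-3)!!$, which grows superexponentially in each part and heavily favours unbalanced pendants, so it is not at all clear that the sum of squares concentrates at the balanced value $\Theta(n^{3/2})$ rather than being dominated by a single pendant of size $\Theta(n)$. Without this estimate your birthday bound does not go through. The paper's iterate-and-restrict trick sidesteps this difficulty entirely, and is the idea you are missing.
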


\begin{proof}
 Let $T_1$ and $T_2$ be chosen uniformly and independently from $RB(n)$. Choose leaves $i$ and $j$ uniformly at random and temporarily regard $T_1$ as an unrooted tree by suppressing the root vertex. If $d_{ij}$ is the distance between $i$ and $j$ in the unrooted tree, then, in the rooted tree, the distance from the root vertex to either  $i$ or $j$ must be greater than or equal to $\frac{1}{2}d_{ij}$. 
Therefore, the probability that $T_1$ has height at least $m$ is greater than or equal to the probability that $d_{ij} \geq 2m$.
If $T_1$ has height $k$ then we can choose $S \subset [n]$ with $|S| = k$ so that $T_{1}|_{S}$ is a rooted caterpillar. By Lemma \ref{c}, the probability of finding $S\subset [n]$ with $|S| \geq \frac{1}{2}n^{1/2}$ such that $T_{1}|_S$ is a rooted caterpillar is greater than $c$. If such an $S$ exists, then since $T_2$ was chosen uniformly from $RB(n)$ and independently from $T_1$, $T_{2}|_S$ is a tree chosen uniformly from $RB(|S|)$. Applying Lemma \ref{c} again, the probability that there exists an $S'\subset [|S|]$ with $|S'| \geq \frac{1}{2}\sqrt{|S|}$  such that  $T_{2}|_{S'}$ is a rooted caterpillar tree is also greater than $c$. Since the restriction of a rooted caterpillar is a rooted caterpillar the result follows.
\end{proof}

Now let $(T_1,T_2)$ be a pair of trees satisfying the conditions of Lemma \ref{caterpillar}. Select the set $S'$ with  \blue{ $|S'| = {q(n)} := \lfloor \frac{1}{2\sqrt{2}}n^{1/4} \rfloor $}
and relabel the leaves of both $T_1$ and $T_2$ so that when drawn with the leaf vertex adjacent to the root on the left, the leaf labels of $T_{1}|_{S'}$ increase from left to right. Draw $T_2$ in the same way picking either representation for the leaves of the cherry in $T_2$ with equal probability. 
The order of the leaves of $T_2$ gives a permutation uniformly chosen from the set of permutations of $[q]$.  From Aldous and Diaconis \cite{Aldous1999}, we have:
 
\begin{thm}
\label{subsequence}
  \cite[Theorem 2]{Aldous1999} Let $\pi_n$ be a uniform random permutation of $[n]$\blue{. Define} the integer valued random variable $L_n := l(\pi_n)$ where $l(\pi)$ is the length of the longest increasing subsequence of $\pi$. Then $\mathbb{E}[L_n] {\raise.17ex\hbox{$\scriptstyle\sim$}} 2n^{1/2}$ as $n \rightarrow \infty$.
\end{thm}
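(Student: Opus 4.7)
The plan is to prove Theorem~\ref{subsequence} via the classical RSK-correspondence route of Vershik--Kerov and Logan--Shepp. First, apply the Robinson--Schensted bijection, which associates to every permutation $\pi \in S_n$ a pair $(P,Q)$ of standard Young tableaux of a common shape $\lambda \vdash n$. Schensted's theorem states that $l(\pi) = \lambda_1$, the length of the first row of $\lambda$. When $\pi_n$ is uniformly distributed on $S_n$, the pushforward of the uniform measure onto shapes is the Plancherel measure $\mathbb{P}(\lambda) = (f^\lambda)^2 / n!$, where $f^\lambda$ is the number of standard Young tableaux of shape $\lambda$. The problem thus reduces to estimating $\lambda_1$ under the Plancherel measure.

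Second, invoke the limit-shape theorem: with the diagonal rescaling by $\sqrt{n}$, the boundary profile of a Plancherel-random Young diagram converges in probability (in sup norm) to an explicit deterministic concave curve $\Omega$ supported on the interval $[-2,2]$. In particular, $\lambda_1 / \sqrt{n} \to 2$ in probability, and therefore $L_n / \sqrt{n} \to 2$ in probability. To upgrade this to convergence of expectations, I would establish uniform integrability of the family $\{L_n / \sqrt{n}\}_{n \geq 1}$: the crude deterministic bound $L_n \leq n$ is not enough, but combining Fatou's lemma for the $\liminf$ with a matching upper bound---either Hammersley's subadditive/ergodic argument, refined to the sharp constant $2$ via the Plancherel analysis, or an $L^2$ moment bound for $L_n$ derived from the hook-length formula---yields $\mathbb{E}[L_n] / \sqrt{n} \to 2$.

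The main obstacle is the identification of the exact constant $2$. Existence of $\lim \mathbb{E}[L_n]/\sqrt{n} = c$ for some positive $c$ is comparatively soft and follows from Hammersley's subadditive ergodic theorem after Poissonizing the number of points (viewing $L_n$ as the longest increasing chain in a Poisson point process on $[0,1]^2$). Pinning down $c = 2$ is the deep step, requiring either the Vershik--Kerov / Logan--Shepp variational analysis of the Plancherel limit shape or a direct asymptotic analysis of Plancherel measure via the Cauchy--Littlewood symmetric-function identities. This is the technical heart of the argument, and within our context it is cleanest simply to quote the result from \cite{Aldous1999} as stated.
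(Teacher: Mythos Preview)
Your outline of the Vershik--Kerov / Logan--Shepp argument via RSK and the Plancherel limit shape is correct and is indeed the standard route to the sharp constant $2$. However, the paper does not prove this theorem at all: it is stated with the citation \cite[Theorem~2]{Aldous1999} and invoked purely as a black box in the proof of Theorem~\ref{Ulowerbound}. Your own closing remark---that within this context it is cleanest simply to quote the result---is precisely what the paper does, so no proof is expected here.
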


Observe that if $\sigma$ is the permutation of $\blue{[q(n)]}$ given by $T_{2}|_{S'}$ and $s(\sigma)$ is
the set of elements of \blue{an increasing subsequence} of $\sigma,$ then $T_{1}|_{s(\sigma)} = T_{2}|_{s(\sigma)}$ \blue{implying} $M(T_1, T_2) \geq l(\sigma)$. 
In other words, $ \mathbb{P} (\mast(T_1, T_2) = i ) $ is at least the product of the probability that $T_1$ and $T_2$ restrict
to a rooted caterpillar of size $\blue{q(n)}$ and $ \mathbb{P}(L_{\blue{q(n)}} = i) $.
This is the key observation that will allow us to prove Theorem \ref{Ulowerbound}.

\begin{proof}[of Theorem \ref{Ulowerbound}]:
By Lemma \ref{c} and Theorem \ref{subsequence} and the observations above, \blue{as $n \rightarrow \infty$}
\begin{align*}
     f_u(n) 
    &\blue{>}  \displaystyle \sum_{i = 1}^{\blue{q(n)}} i\mathbb{P}(M(T_{1}, T_{2}) =i) \\
    & >  \displaystyle \sum_{i = 1}^{\blue{q(n)} } i(c^2 \mathbb{P}(L_{\blue{q(n)} } = i)) \\
    &= c^2  \mathbb{E}[L_{\blue{q(n)}}]. \\
    &\sim \blue{2^{1/4}c^2n^{1/8}}.  
\end{align*}
\end{proof}



\section{Yule-Harding Trees:  Lower Bounds}

In this section we derive our lower bounds on the expected size
of the maximum agreement subtree under the Yule-Harding distribution \cite{harding1971}.
The Yule-Harding distribution is a probability distribution on \blue{the set of rooted binary}
trees that is defined in a constructive manner, by building
up a tree on $n$ leaves by successively adjoining leaves.
A Yule-Harding tree on $n$ leaves is obtained from a Yule-Harding
tree on $n-1$ leaves by  choosing a leaf uniformly at  randomly
and branching that leaf into two new leaves.  Leaf labels for the \blue{$(n-1)$-leaf} tree are chosen as \blue{a uniformly random subset} of size $n-1$ from
$[n]$.
Let $f_{YH}(n)$ denote the expected size of the maximum-agreement
subtree between two trees from $RB(n)$ sampled independently from the Yule-Harding distribution.

\begin{thm}\label{thm:extreme}
Let $a$ be the unique positive root of the equation
$
2^{2 - a}  =  (a + 1)(a+2)
$
(approximately $a = .344184...$).  Then for any $\epsilon > 0$, 
$f_{YH}(n) = \Omega(n^{a - \epsilon})$.
\end{thm}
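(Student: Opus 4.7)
The plan is to exploit the recursive structure of the Yule-Harding distribution via a root-splitting argument. Under Yule-Harding, a tree $T$ on leaf set $[n]$ decomposes at its root into two independent Yule-Harding subtrees on disjoint label sets $L$ and $R$, where $|L|$ is uniform on $\{1,\dots,n-1\}$ and (by exchangeability) $L$ is uniformly chosen among size-$|L|$ subsets of $[n]$. By sampling consistency, for every $S\subseteq[n]$ the restriction $T|_S$ is again Yule-Harding on $S$.

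For independent Yule-Harding trees $T_1,T_2$ with root splits $L_1\mid R_1$ and $L_2\mid R_2$, the four intersection sets partition $[n]$, and either of the two ``parallel'' root matchings yields an agreement subtree by gluing MASTs of the matched intersections at the root; explicitly,
\[
\mast(T_1,T_2)\ \geq\ \max\bigl(\mast(T_1|_{L_1\cap L_2},T_2|_{L_1\cap L_2})+\mast(T_1|_{R_1\cap R_2},T_2|_{R_1\cap R_2}),\ \mast(T_1|_{L_1\cap R_2},T_2|_{L_1\cap R_2})+\mast(T_1|_{R_1\cap L_2},T_2|_{R_1\cap L_2})\bigr).
\]
Taking expectations and invoking sampling consistency (so that each $T_k|_S$ is an independent Yule-Harding tree of size $|S|$) converts this into a recursive lower bound for $f_{YH}(n)$ in terms of $\mathbb{E}[f_{YH}(|A|)]$ for the various intersection sizes $|A|$, which are governed by the (uniform) distributions of $|L_1|,|L_2|$ and conditionally a hypergeometric distribution.

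To extract the exponent I would work with an inductive ansatz of the form $f_{YH}(n)\geq C(n-n_0)^{a-\epsilon}$, with a constant shift $n_0$ absorbing small-$n$ boundary effects, and verify the inequality by substitution. In the large-$n$ regime $|L_k|/n\to U_k$ with $U_1,U_2$ independent uniforms on $[0,1]$, and conditional on the marginals the intersection sizes concentrate around $nU_1U_2$, $n(1-U_1)(1-U_2)$, etc. Combined with the pigeonhole observation that at least one of $|L_1\cap L_2|,|L_1\cap R_2|$ is at least $|L_1|/2$ (and similarly on the right side and with the roles of $T_1,T_2$ swapped), one reduces the inductive step to estimating an integral against the joint law of $(U_1,U_2)$; after simplification using the identity $\int_0^1 x^a(1-x)\,dx=1/((a+1)(a+2))$, the critical balance condition becomes $2^{2-a}=(a+1)(a+2)$, whose unique positive root is the claimed exponent $a$.

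The main obstacle is the discrete and asymptotic bookkeeping: the hypergeometric distribution for the intersection sizes is poorly concentrated when one marginal is small, so contributions from the region where $|L_k|=o(n)$ must be handled separately, and the base-case values $f_{YH}(k)$ for small $k$ must be reconciled with the inductive constants $C$ and $n_0$. These technicalities, together with the need to approximate the discrete recursion by its continuum limit to obtain the algebraic equation, are exactly what force the theorem to carry an $\epsilon$ of slack rather than delivering the bound at the critical exponent itself.
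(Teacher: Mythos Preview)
Your proposal captures the paper's approach: root-splitting recursion, uniform split sizes under Yule--Harding, an inductive power ansatz, and the emergence of $2^{2-a}=(a+1)(a+2)$ from the integral $\int_0^1 x^a(1-x)\,dx=1/((a+1)(a+2))$. The paper proceeds exactly this way.

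Where you diverge is in invoking the hypergeometric distribution and concentration for the intersection sizes $|L_1\cap L_2|$, etc. The paper avoids this entirely via a \emph{deterministic} pigeonhole lemma: if the two root splits have smaller sides of sizes $i\leq j\leq n/2$, then one of the two root pairings always yields intersections of sizes at least $\lceil i/2\rceil$ and $j-\lfloor i/2\rfloor$ simultaneously. Since $f$ is nondecreasing (by sampling consistency), this already gives $\mathbb{E}[\mast(T_1,T_2)\mid i,j]\geq f(\lceil i/2\rceil)+f(j-\lfloor i/2\rfloor)$ with no distributional analysis of the intersections needed. Summing over $(i,j)$ and reindexing collapses the double sum to the single inequality
\[
f(n)\ \geq\ \frac{8}{(n-1)^2}\sum_{i=1}^{\lfloor n/2\rfloor}(n-2i)\,f(i),
\]
after which the induction $f(i)\geq c\,i^{a}$ plus an endpoint-rule integral bound produces the factor $2^{2-a}/((a+1)(a+2))$ directly. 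Your hypergeometric/concentration route would presumably reach the same endpoint, but it is precisely the source of the ``poorly concentrated when one marginal is small'' obstacle you flag; the deterministic pigeonhole bound makes that worry disappear, and no separate treatment of the region $|L_k|=o(n)$ is required.
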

We abbreviate $f_{YH}(n) = f(n)$ in the arguments below.
\blue{We first establish a number of preliminary results that are needed for the proof of Theorem ~\ref{thm:extreme}.}

\begin{lem}\label{lem:discreteintegral}
Let $b \geq 0$.  Then
\blue{$$
 \frac{1}{b+1}k^{b+1} \leq \sum_{i =1}^{k}  i^{b}  \leq \frac{1}{b+1}(k+1)^{b+1}.
$$}
\end{lem}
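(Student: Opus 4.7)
The plan is to prove both inequalities by comparing the sum to an integral of the monotone function $x^b$. Since $b \geq 0$, the function $x \mapsto x^b$ is nondecreasing on $[0,\infty)$, so on any unit interval we can bound $x^b$ above and below by its values at the endpoints.

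First I would establish the lower bound. On each interval $[i-1, i]$ with $1 \leq i \leq k$, monotonicity gives $x^b \leq i^b$, so integrating yields $\int_{i-1}^{i} x^b\, dx \leq i^b$. Summing over $i = 1, \ldots, k$ and using $\int_0^k x^b\, dx = \frac{k^{b+1}}{b+1}$ produces
$$\frac{k^{b+1}}{b+1} = \int_0^k x^b\, dx \leq \sum_{i=1}^{k} i^b.$$

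For the upper bound I would similarly work on $[i, i+1]$ where monotonicity gives $i^b \leq x^b$, so $i^b \leq \int_i^{i+1} x^b\, dx$. Summing for $i = 1, \ldots, k$ gives
$$\sum_{i=1}^{k} i^b \leq \int_1^{k+1} x^b\, dx = \frac{(k+1)^{b+1} - 1}{b+1} \leq \frac{(k+1)^{b+1}}{b+1}.$$

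There is no real obstacle here; the only point requiring a moment of care is the case $b = 0$, where one should check that the formulas reduce correctly (the sum equals $k$, and the bounds give $k$ and $k+1$, respectively), and the mild abuse at $x = 0$ when $b = 0$ in the integrand, which is harmless since the integral is taken in the Riemann sense over a bounded interval. So the whole argument is just the standard left/right Riemann sum comparison for an increasing function.
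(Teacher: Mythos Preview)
Your proof is correct and is essentially the same approach as the paper's: the paper's one-line proof says the inequalities follow from the left and right endpoint rules for $\int_0^k x^b\,dx$, and you have simply written out those comparisons explicitly.
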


\begin{proof}
This follows by applying the left and right end-point rules for the integral
of  $\int_{0}^{k}  x^{b} dx$.
\end{proof}

\blue{Next we } calculate lower bounds on the overlap of any two splits of $[n]$ (bipartitions of the leaves):

\begin{lem}\label{lem:sizereq}
Let $A_{1}|B_{1}$ and $A_{2}|B_{2}$ be two splits of $[n]$ with
$|A_{1}| = i$,  $|A_{2}| = j$ and $i \leq j \leq n/2$.  
Then either
\begin{eqnarray*}
|A_{1} \cap A_{2}|  \geq \lceil i/2 \rceil  \mbox{ and } |B_{1} \cap B_{2}| 
\geq j - \lfloor i/2 \rfloor  &   & \mbox{ or }  \\
|A_{1} \cap B_{2}|  \geq \lceil i/2 \rceil  \mbox{ and } |B_{1} \cap A_{2}| 
\geq j - \lfloor i/2 \rfloor.  &  & 
\end{eqnarray*}
\end{lem}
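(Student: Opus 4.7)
My plan is to prove the lemma by a direct pigeonhole-style argument on the intersection sizes, using the constraint $j \leq n/2$ only at the very last step to bound $|B_1 \cap B_2|$ (or $|B_1 \cap A_2|$).

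First, I would observe that $A_1$ is partitioned by the split $A_2 | B_2$, so
\[
|A_1 \cap A_2| + |A_1 \cap B_2| = |A_1| = i.
\]
By relabeling $A_2$ and $B_2$ if necessary (this just swaps which of the two conclusions in the disjunction we end up proving), I may assume without loss of generality that $|A_1 \cap A_2| \geq |A_1 \cap B_2|$. Then the larger summand is at least $\lceil i/2 \rceil$ and the smaller is at most $\lfloor i/2 \rfloor$, giving immediately
\[
|A_1 \cap A_2| \geq \lceil i/2 \rceil \quad \text{and} \quad |A_1 \cap B_2| \leq \lfloor i/2 \rfloor.
\]

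Next I would compute $|B_1 \cap B_2|$ by partitioning $B_2$ according to the split $A_1 | B_1$:
\[
|B_1 \cap B_2| = |B_2| - |A_1 \cap B_2| \geq (n - j) - \lfloor i/2 \rfloor.
\]
Finally, the hypothesis $j \leq n/2$ gives $n - j \geq j$, and hence $|B_1 \cap B_2| \geq j - \lfloor i/2 \rfloor$, as required.

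There is no real obstacle here: the argument is a two-line set-theoretic computation once the WLOG reduction is made. The only subtlety worth flagging is that the assumption $i \leq j$ is actually unused in the bound on $|A_1 \cap A_2|$ and $|A_1 \cap B_2|$ — it is only the hypothesis $j \leq n/2$ that drives the lower bound on the $B$-side intersection, by ensuring that the complement $B_2$ is at least as large as $A_2$.
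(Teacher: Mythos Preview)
Your proof is correct and follows essentially the same approach as the paper's: the paper organizes the four intersection sizes into a $2\times 2$ matrix with row sums $i,\,n-i$ and column sums $j,\,n-j$, then argues by pigeonhole on the first row and reads off the remaining bound from a column sum, which is exactly what your set-theoretic computation does. One small remark: your ``relabel $A_2$ and $B_2$'' WLOG is not a literal symmetry of the hypotheses (since $j=|A_2|$ would change), but rerunning your argument in the other case gives $|B_1\cap A_2| = j - |A_1\cap A_2| \geq j - \lfloor i/2\rfloor$ directly, without even needing $j\leq n/2$, so the reduction is harmless.
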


\begin{proof}
Make a $2 \times 2$ matrices whose entries are the four intersection values:
$$ M = 
\begin{pmatrix}
|A_{1} \cap A_{2}|  & |A_{1} \cap B_{2}|  \\
|B_{1} \cap A_{2}|  &  |B_{1} \cap B_{2}| 
\end{pmatrix}.
$$
The row sums of $M$ are $i, n-i$ and the column sums are $j, n-j$.
So either $M_{11}$ or $M_{12}$ are $\geq \lceil i/2 \rceil $.
If $M_{11} \geq \lceil i/2 \rceil$ then $M_{22} \geq n- j - \lfloor i/2 \rfloor \geq
 j - \lfloor i/2 \rfloor$, since $j \leq n/2$.
If $M_{12} \geq \lceil i/2 \rceil$ then $M_{21}  \geq j - \lfloor i/2 \rfloor$.
\end{proof}

We can use Lemma \ref{lem:sizereq} in a worst case
analysis to get lower \blue{bounds on  $f(n)$}.

\begin{lem}
Let $n = 2k+1$ be odd.  Then
$$
f(2k+1)  \geq  \frac{8}{(n-1)^{2}}  
\sum_{1 \leq i < j \leq k}  (f( \lceil i/2 \rceil) + f(j - \lfloor i/2 \rfloor) )
+  \frac{8}{(n-1)^{2}}    \sum_{1 \leq i \leq k}  f( \lceil i/2 \rceil).
$$
Let $n = 2k$ be even.  Then

\begin{eqnarray*}
f(2k) &  \geq &   \frac{8}{(n-1)^{2}}  
\sum_{1 \leq i < j < k}  (f( \lceil i/2 \rceil) + f(j - \lfloor i/2 \rfloor) )
+ \frac{8}{(n-1)^{2}}    \sum_{1 \leq i < k}  f( \lceil i/2 \rceil)  \\
 &  &  + \frac{4}{(n-1)^{2}}  
 \sum_{1 \leq i < k}  (f( \lceil i/2 \rceil) + f(k - \lfloor i/2 \rfloor) )
 + \frac{2}{(n-1)^{2}}    f( \lceil k/2 \rceil)
\end{eqnarray*}
\end{lem}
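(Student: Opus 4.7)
The plan is to condition on the sizes of the root splits of $T_1$ and $T_2$, apply Lemma~\ref{lem:sizereq} to extract two disjoint subsets sitting on matching sides of both roots, and then use the sampling consistency of Yule--Harding to reduce to smaller instances of $f$.

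Let $T_1, T_2$ be independent Yule--Harding trees on $[n]$ with root splits $A_1 | B_1$ and $A_2 | B_2$. A standard property of the Yule--Harding distribution is that the unordered root split has sizes $\{s, n-s\}$ with probability $\frac{2}{n-1}$ for $s < n/2$ and $\frac{1}{n-1}$ for $s = n/2$, so $X_l := \min(|A_l|, |B_l|)$ satisfies $\mathbb{P}(X_l = s) = \frac{2}{n-1}$ for $1 \leq s < n/2$ and $\mathbb{P}(X_l = n/2) = \frac{1}{n-1}$ when $n$ is even. Conditioning on $\{X_1, X_2\} = \{i, j\}$ with $i \leq j \leq k$, and relabeling $A_l \leftrightarrow B_l$ so that $|A_l| = X_l$, Lemma~\ref{lem:sizereq} supplies sets $U \subseteq A_1$ and $V \subseteq B_1$ with $|U| \geq \lceil i/2 \rceil$ and $|V| \geq j - \lfloor i/2 \rfloor$, and such that either $U \subseteq A_2, V \subseteq B_2$ or $U \subseteq B_2, V \subseteq A_2$.

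Next I would fix a deterministic rule (for instance, taking lex-smallest elements) to shrink $U, V$ to subsets $U', V'$ of the exact sizes $\lceil i/2 \rceil$ and $j - \lfloor i/2 \rfloor$. Since $U'$ sits on one side of each root and $V'$ on the opposite side, joining any rooted agreement subtree on $U'$ with any rooted agreement subtree on $V'$ at a new root produces a rooted agreement subtree of $(T_1, T_2)$, giving
$$\mast(T_1, T_2) \geq \mast(T_1|_{U'}, T_2|_{U'}) + \mast(T_1|_{V'}, T_2|_{V'}).$$
By sampling consistency each of $T_l|_{U'}$ and $T_l|_{V'}$ is Yule--Harding on its leaf set, and independence across $l = 1, 2$ is inherited from $T_1, T_2$; conditioning on $(X_1, X_2) = (i, j)$ and taking expectations therefore yields $\mathbb{E}[\mast(T_1, T_2) \mid X_1 = i,\, X_2 = j] \geq f(\lceil i/2 \rceil) + f(j - \lfloor i/2 \rfloor)$ for any $i \leq j$.

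The two claimed bounds now follow by summing over unordered pairs $\{X_1, X_2\} = \{i, j\}$ weighted by their joint probabilities. For odd $n = 2k+1$ every $s \in \{1, \ldots, k\}$ has equal weight $\frac{2}{n-1}$, producing factor $\frac{8}{(n-1)^2}$ for unordered pairs $i < j$ (two orderings) and factor $\frac{4}{(n-1)^2}$ for $i = j$, which doubles to $\frac{8}{(n-1)^2}$ because $f(\lceil i/2 \rceil) + f(i - \lfloor i/2 \rfloor) = 2 f(\lceil i/2 \rceil)$. For even $n = 2k$ the reduced weight $\frac{1}{n-1}$ at $s = k$ splits the sum into four blocks ($i < j < k$, $i = j < k$, $i < j = k$, $i = j = k$), producing precisely the three coefficients $\frac{8}{(n-1)^2}, \frac{4}{(n-1)^2}, \frac{2}{(n-1)^2}$ from the statement. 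The main obstacle will be this even-case bookkeeping; a subtler conceptual point is that $U', V'$ depend on $T_1, T_2$ through the labels, but conditional on $A_1, A_2$ the internal shapes of the four root-subtrees are independent Yule--Harding and depend on the labels only through which leaves belong to each subtree, so sampling consistency cleanly yields the conditional expectation $f(|U'|) + f(|V'|)$.
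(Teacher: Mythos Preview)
Your proposal is correct and follows essentially the same approach as the paper's proof: both condition on the root-split sizes, invoke Lemma~\ref{lem:sizereq} to find matching subsets on each side, use sampling consistency of Yule--Harding to reduce to $f(\lceil i/2\rceil)+f(j-\lfloor i/2\rfloor)$, and then account for multiplicities to obtain the stated coefficients. Your write-up is in fact more explicit than the paper's about the measurability issue (that $U',V'$ depend on the label partitions $A_1,A_2$, so one must condition on those sets rather than just their sizes before invoking sampling consistency), which the paper handles only implicitly.
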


\begin{proof}
For two discrete random variables $X$ and $Y$, the law of total expectation
says
\begin{equation}
\mathbb{E}[X]  =  \sum_{y} P(Y =y)  \mathbb{E}[X | Y = y].
\label{eqn:totalE}
\end{equation}

We use this identity to get lower bounds \blue{on} $f(n)$.
In particular, we condition on the event that the daughter
subtrees of the root in $T_{1}$ and $T_{2}$ have sizes $(i, n - i)$  
and $(j, n-j)$ respectively, and so we apply (\ref{eqn:totalE}) with $Y=(i,j)$ and $X= \mast(T_1, T_2)$.
Since we are sampling from the Yule-Harding model, 
the size of the daughter subtrees of $T_1$ (and of $T_2$) follows a uniform distribution (\blue{see \cite{slowinski1990} and also} follows directly from basic P\'{o}lya Urn theory \cite{mahmoud}). Thus, the probability of the conditioning event is
$\frac{1}{(n-1)^{2}}$ for any $i, j \in [n-1]$. 

By the \blue{symmetry of the problem, it suffices to look}
at pairs $1 \leq i \leq j \leq \lfloor n/2 \rfloor$.  This is where
the factor of $8$ comes from in the various expressions.
Once we fix $i$ and $j$, we are restricting to the case where two random Yule-Harding trees have daughter subtrees of the root
$A_{1}|B_{1}$ and $A_{2}|B_{2}$ where $|A_{1}| = i$ and $|A_{2}| = j$.
 To get a lower bound on the conditional expectations, we can
reduce to the two subtrees on either $A_{1} \cap A_{2}$ and $B_{1} \cap B_{2}$
or  $A_{1} \cap B_{2}$ and $B_{1} \cap A_{2}$,
depending on which one satisfies the size requirements from Lemma\ref{lem:sizereq}.  On those two induced subtrees we have an expected
$\mast$ of size at least $f( \lceil i/2 \rceil)$  and $f(j - \lfloor i/2 \rfloor)$
respectively.  Combining those trees together through the common root
gives a $\mast$ of expected size $f( \lceil i/2 \rceil) + f(j - \lfloor i/2 \rfloor)$.
The formulas follow by analysis by cases with attention to double counting and the boundary cases of $i = j$, $i < j = n/2$ and $i = j = n/2$.
\end{proof}

As a direct application, we have by counting the number of occurrences of $f(i)$:

\begin{cor}\label{cor:simple}
$$
f(n)  \geq  \frac{8}{(n-1)^{2}}  \sum_{i = 1}^{\lfloor n/2 \rfloor}
(n - 2i) f(i).
$$
\end{cor}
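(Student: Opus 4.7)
The plan is to prove Corollary \ref{cor:simple} by direct counting: for each $i'$ in the range $1 \leq i' \leq \lfloor n/2 \rfloor$, I would tally the total coefficient with which $f(i')$ appears on the right-hand side of the preceding lemma and verify that this coefficient is at least $\frac{8(n - 2i')}{(n-1)^2}$. Pulling out the common factor $\frac{8}{(n-1)^2}$ at the end then produces the stated bound.

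For the odd case $n = 2k+1$, I would fix $i'$ and enumerate contributions from the three sums appearing in the preceding lemma. In the double sum $\sum_{1 \leq i < j \leq k} f(\lceil i/2 \rceil)$, the equation $\lceil i/2 \rceil = i'$ has solutions $i \in \{2i'-1, 2i'\}$, each paired with $j \in \{i+1,\ldots,k\}$. In the double sum $\sum_{1 \leq i < j \leq k} f(j - \lfloor i/2 \rfloor)$, solving $j - \lfloor i/2 \rfloor = i'$ for valid $(i,j)$, after a brief case split on the parity of $i$, yields a range of the form $i \in [1, 2i' - 2]$. In the diagonal sum $\sum_{1 \leq i \leq k} f(\lceil i/2 \rceil)$, again $i \in \{2i'-1, 2i'\}$ contributes. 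Adding the three contributions, with small boundary corrections when $i'$ is near $k$, yields exactly $n - 2i'$ as the coefficient of $f(i')$.

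For the even case $n = 2k$, the same counting strategy applied to the first two groups of sums gives a coefficient equal to $n - 2i'$ for all but the boundary index $i' = \lceil k/2 \rceil$, where the reduced-weight fourth term $\frac{2}{(n-1)^2} f(\lceil k/2 \rceil)$ and the reduced-weight third sum introduce a small shortfall. This shortfall is offset by an excess in the coefficient of $f(k)$ coming from those same boundary terms. Since the Yule-Harding distribution is sampling consistent, $f$ is monotonically nondecreasing: given Yule-Harding trees $T_1, T_2$ on $[n+1]$, sampling consistency ensures $T_1|_{[n]}, T_2|_{[n]}$ are independent Yule-Harding trees on $[n]$, and any agreement subtree of the pair $(T_1|_{[n]}, T_2|_{[n]})$ is also an agreement subtree of $(T_1, T_2)$, so $f(n+1) \geq f(n)$. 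This monotonicity lets me trade the excess $f(k)$-weight against the shortfall at $f(\lceil k/2 \rceil)$ to recover the bound.

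The main obstacle is the bookkeeping: keeping track of how the solution sets to $\lceil i/2 \rceil = i'$ and $j - \lfloor i/2 \rfloor = i'$ shift with the parity of $i$ and the particular index ranges $[1,k-1]$, $[1,k]$, $[1,k)$ appearing in the four sums, as well as handling the endpoint cases $i' = \lceil k/2 \rceil$ and $i' = k$ carefully. Once the count is verified, the corollary is immediate.
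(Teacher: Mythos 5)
Your approach is exactly the paper's: the paper proves this corollary with the single line ``by counting the number of occurrences of $f(i)$'' in the preceding recursive lemma, which is precisely the tally you carry out, and your generic count does yield the coefficient $n-2i'$. Your treatment is in fact more careful than the paper's, since you correctly detect the genuine $\tfrac{1}{4}$-shortfall in the coefficient of $f(\lceil k/2\rceil)$ in the even case $n=2k$ (e.g.\ for $n=6$ the lemma gives $4f(1)+\tfrac{7}{4}f(2)+\tfrac{1}{2}f(3)$ versus the claimed $4f(1)+2f(2)$) and patch it with the excess $\tfrac{1}{2}f(k)$ term via monotonicity of $f$, which you correctly derive from sampling consistency --- a boundary issue the paper silently glosses over.
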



The lower bound for the Yule-Harding case follows by induction on the inequality of Corollary~\ref{cor:simple}:

\begin{proof}[of Theorem~\ref{thm:extreme}]
We use the inequality from Corollary \ref{cor:simple},
together with induction.  Clearly $f(1) =1  \geq  c \times 1^{a}$ for $c =1$.  
Assume that $f(k) \geq  c k^{a}$ for all $k < n$, then we have:
$$
f(n)  \geq  \frac{8}{(n-1)^{2}}  \sum_{i = 1}^{\lfloor n/2 \rfloor} 
(n - 2i) \times c i^{a}. 
$$
Applying  Lemma~\ref{lem:discreteintegral} we deduce
\begin{eqnarray*}
f(n) & \geq & \frac{8}{(n-1)^{2}} \cdot c \cdot  \left(  \frac{n}{a + 1} {\lfloor \frac{n}{2} \rfloor}^{a+1} 
- \frac{2}{a + 2} {\lfloor \frac{n}{2}  + 1\rfloor}^{a+2}   \right)  \\
&  = &  \frac{2^{2 - a}}{(a + 1)(a+2)}  \cdot c \cdot 
\frac{(a + 2) n (2 \lfloor \frac{n}{2} \rfloor)^{a+1}  -  (a+1) ( 2  \lfloor \frac{n}{2}  + 1\rfloor) ^{a+2}  }{(n-1)^{2}}.  
\end{eqnarray*}

Note that the rightmost expression in the product is asymptotic to
$n^{a}$, converging to it from below.
In particular, for any $\delta > 0$, there exists \blue{an} $N$ such that for
all $n > N$, we have  
$$
\frac{(a + 2) n (2 \lfloor \frac{n}{2} \rfloor)^{a+1}  -  (a+1) ( 2  \lfloor \frac{n}{2}  + 1\rfloor) ^{a+2}  }{(n-1)^{2}}  > (1 - \delta) n^{a}.
$$
This yields 
\begin{eqnarray*}
f(n) & \geq  &  \frac{2^{2 - a}}{(a + 1)(a+2)} (1 - \delta)  \times c n^{a}.
\end{eqnarray*}
To complete the induction we must have 
$$\frac{2^{2 - a}}{(a + 1)(a+2)} (1 - \delta)  \geq 1.$$
Since we can take $\delta$ arbitrarily small, this completes the result.
\end{proof}

Note that in this argument the value of $c$ will depend on $\delta$ (through the interaction with $N$).  Hence, we cannot use the proof argument to take $\epsilon = 0$ in the statement.

\begin{rmk}
Further modifications to the above argument can be made to get 
slight improvements
on the exponent.
For example, when $i$ is very small, instead of taking the subsets
of size $\geq \lceil i/2 \rceil$ and $\geq j - \lfloor i/2 \rfloor$, 
passing to a single subset of size $n-i$ (throwing out the subset of size $i$)
can yield an improvement in the bounds.  That is, if $i$ is small then
$$
f(\lceil i/2 \rceil) + f( j - \lfloor i/2 \rfloor)  \leq  f(n-i)
$$
when $f(n) = n^{\alpha}$ and $\alpha$ bounded away from $0$.  
Using this reasoning coupled with the arguments above, we were able to 
increase the exponent in the theorem to approximately $.384$.
\end{rmk}


\section{Yule-Harding Trees:  Upper Bounds}

In this section, we derive $O(n^{1/2})$ upper bounds on 
the expected size of the maximum agreement subtree for any
distribution on trees that is exchangeable and
satisfies sampling consistency \blue{(described at the end of the Introduction)} based on ideas from
\cite{Bryant2003}.  
We just need 
the crucial Lemma 4.1 from that paper.
For a fixed distribution on trees let $\mathbb{P}_s(t)$
be the probability  of the tree
$t$ which has $s$ leaves.

\begin{lem}\cite[Lemma 4.1]{Bryant2003}
Suppose that phylogenetic trees $T_1$ and $T_2$ on a leaf set $L$ of size $n$ are randomly
generated under a model that satisfies exchangeability and sampling 
consistency.  Then
$$
\mathbb{P}[\mast(T_1, T_2) \geq s ]  \leq  \psi_{n,s} =  { n \choose s}  \sum_{t \in RB(s)}
\mathbb{P}_s(t)^2.
$$ 
\end{lem}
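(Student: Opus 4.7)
The plan is to apply a union bound over all $s$-element subsets of the leaf set $L$, then use sampling consistency and exchangeability to evaluate the probability that $T_1|_S$ and $T_2|_S$ coincide for a fixed subset $S$.

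First I would observe that the event $\{\mast(T_1, T_2) \geq s\}$ is precisely the event that there exists some $S \subseteq L$ with $|S| = s$ such that $T_1|_S = T_2|_S$. A direct union bound then gives
\[
\mathbb{P}[\mast(T_1, T_2) \geq s] \;\leq\; \sum_{\substack{S \subseteq L \\ |S| = s}} \mathbb{P}[T_1|_S = T_2|_S],
\]
and the sum has exactly $\binom{n}{s}$ terms. The task is therefore reduced to bounding (in fact, evaluating) the summand for a single fixed $S$.

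Next I would fix $S$ and expand
\[
\mathbb{P}[T_1|_S = T_2|_S] \;=\; \sum_{t} \mathbb{P}[T_1|_S = t]\,\mathbb{P}[T_2|_S = t],
\]
using independence of $T_1$ and $T_2$, where $t$ ranges over rooted binary phylogenetic trees with leaf set $S$. By sampling consistency, the induced tree $T_i|_S$ is distributed according to the same model but on $s$ leaves. By exchangeability, that distribution depends only on the shape of $t$ and not on any particular identification of $S$ with $[s]$: relabeling $S$ to $[s]$ via any bijection shows that $\mathbb{P}[T_i|_S = t]$ equals $\mathbb{P}_s(t')$, where $t'$ is the corresponding tree in $RB(s)$. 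Consequently
\[
\mathbb{P}[T_1|_S = T_2|_S] \;=\; \sum_{t' \in RB(s)} \mathbb{P}_s(t')^2,
\]
a quantity that is the same for every $S$.

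Combining the two displays gives the claimed inequality. There is no real obstacle here: the only subtle point is verifying that exchangeability together with sampling consistency does justify identifying the distribution of $T_i|_S$ with the distribution on $RB(s)$, which amounts to noting that the bijection $S \to [s]$ used to relabel does not affect probabilities. Once that is in place, the bound follows by pure bookkeeping, and the factor $\binom{n}{s}$ appears simply as the number of choices of $S$.
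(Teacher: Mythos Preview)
Your argument is correct and is precisely the standard proof: union bound over $s$-subsets, independence of $T_1$ and $T_2$, and then sampling consistency plus exchangeability to identify the law of $T_i|_S$ with $\mathbb{P}_s$ on $RB(s)$. The paper itself does not give a proof of this lemma; it simply quotes it from \cite{Bryant2003}, so there is nothing further to compare against.
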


From here, we choose a function $s = g(n)$  so that 
${ n \choose s}  \sum_{t \in RB(s)} \mathbb{P}_s(t)^2$ 
tends rapidly to zero with $n$, to deduce that  $\mathbb{E}[\mast(T_1, T_2)] = 
O(g(n))$.

\begin{prop}
Let $\mathbb{P}_s$ be any exchangeable distribution on
rooted binary trees.  Then
$$
\sum_{t \in RB(s)}
\mathbb{P}_s(t)^2  \leq   \frac{2^{s-1}}{s!}.
$$
\end{prop}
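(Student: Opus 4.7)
My plan is to exploit exchangeability to replace the sum over labelled trees by a sum over unlabelled tree shapes, and then bound the automorphism group of each shape. Let $\tau$ range over the rooted binary shapes on $s$ unlabelled leaves, and let $n(\tau) = s!/|\mathrm{Aut}(\tau)|$ be the number of distinct $[s]$-labellings of $\tau$, where $\mathrm{Aut}(\tau)$ is the automorphism group of $\tau$ as a rooted tree. By exchangeability, the value $p(\tau) := \mathbb{P}_s(t)$ is independent of which labelled tree $t$ of shape $\tau$ one picks, so the quantity $q(\tau) := n(\tau)\, p(\tau)$ equals the total probability that a random tree has shape $\tau$; in particular $q(\tau) \in [0,1]$ and $\sum_\tau q(\tau) = 1$.

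Grouping labelled trees by shape, the sum in question becomes
\[
\sum_{t \in RB(s)} \mathbb{P}_s(t)^2 \;=\; \sum_\tau n(\tau)\, p(\tau)^2 \;=\; \frac{1}{s!}\sum_\tau |\mathrm{Aut}(\tau)|\, q(\tau)^2.
\]
The crux is then a uniform bound $|\mathrm{Aut}(\tau)| \leq 2^{s-1}$. I would establish this by observing that a rooted binary tree on $s$ leaves has exactly $s-1$ internal vertices, and that any automorphism is determined by a binary choice at each such vertex, namely whether or not to swap the two child subtrees (the swap being available only when those subtrees are themselves isomorphic as rooted subtrees).

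Once the automorphism bound is in place, the proposition drops out. Using $q(\tau)^2 \leq q(\tau)$, which holds because $q(\tau) \in [0,1]$, one obtains
\[
\sum_{t \in RB(s)} \mathbb{P}_s(t)^2 \;\leq\; \frac{2^{s-1}}{s!}\sum_\tau q(\tau)^2 \;\leq\; \frac{2^{s-1}}{s!}\sum_\tau q(\tau) \;=\; \frac{2^{s-1}}{s!}.
\]
The only step with any substance is the automorphism bound; the rest is bookkeeping. I do not expect this to be an obstacle, as the description of automorphisms of a rooted binary tree via independent subtree-swaps at its internal vertices is entirely standard.
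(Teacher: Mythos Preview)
Your proof is correct, and it rests on the same combinatorial fact as the paper's: a rooted binary tree on $s$ leaves has $s-1$ internal vertices, so its automorphism group has order $2^m \le 2^{s-1}$ (equivalently, the number of labellings of a shape is $s!/2^m \ge s!/2^{s-1}$).

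Where you differ is in the probability bookkeeping. The paper argues variationally: using $x^2 + y^2 \le (x-\epsilon)^2 + (y+\epsilon)^2$, it shows that among exchangeable distributions the sum $\sum_t \mathbb{P}_s(t)^2$ is maximized by concentrating all mass on a single shape, whence the sum equals $1/NT(t')$ for that shape and the automorphism bound finishes it. You instead rewrite the sum as $\tfrac{1}{s!}\sum_\tau |\mathrm{Aut}(\tau)|\,q(\tau)^2$ and apply the automorphism bound termwise, followed by $q(\tau)^2 \le q(\tau)$ and $\sum_\tau q(\tau)=1$. Your route is a bit more direct and avoids identifying the extremizer; the paper's route has the small bonus of telling you which distributions are worst (those supported on a single, maximally symmetric shape). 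Either way the substance is the same.
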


\begin{proof}
Let $\mathbb{P}_s$ be any exchangeable distribution on
rooted binary trees. 
Note that if $0 < x \leq y$ then
$$
x^2 + y^2  \leq (x- \epsilon)^2 + (y + \epsilon)^2 = x^2 + y^2 + 2 \epsilon^2 + 2\epsilon(y-x). 
$$
This implies that if we take $\mathbb{Q}_s(t)$ to \blue{be} the probability
distribution that puts zero mass on trees that have a shape different from a tree $t'$ that maximizes  $\mathbb{P}_s(t')$, we will have
$$
\sum_{t \in RB(s)}
\mathbb{P}_s(t)^2  \leq  \sum_{t \in RB(s)} \mathbb{Q}_s(t)^2.
$$
By exchangeability, when $\mathbb{Q}_s(t) \neq 0$, it is $\frac{1}{NT(t)}$
where $NT(t)$ is the number of rooted trees with tree shape $t$. Thus,
$$
\sum_{t \in RB(s)} \mathbb{Q}_s(t)^2  =  \blue{\frac{1}{NT(t)^2}+\cdots + \frac{1}{NT(t)^2} \mbox{ ($NT(t)$ times)} = \frac{1}{NT(t)}}.
$$
To maximize this quantity, we choose a tree shape with the fewest number of
trees with that tree shape.  The number of trees of a given shape $t$ is
$s!/2^{m}$ where $m$ is the number of internal vertices of $t$ that are symmetry vertices \blue{(i.e. vertices of $t$ for which the two daughter subtrees have the same shape \cite{sem})}. Since a rooted binary tree with $s$ leaves has $s-1$ internal vertices,  $m \leq s-1$.  
Thus, $NT(t)  \geq s!/2^{s-1}$.
\end{proof}

\begin{thm}
Let $T_1$ and $T_2$ be generated from any exchangeable, sampling
consistent distribution on rooted binary trees with $n$ leaves.
For any $\lambda > e \sqrt{2}$ there is a value $m$ such that, for all $n \geq m$,
$$
\mathbb{E}[\mast(T_1,T_2)]  \leq  \lambda  \sqrt{n}.
$$
\end{thm}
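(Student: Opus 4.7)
The plan is to combine the two preceding bounds via the tail-sum formula for expectation. Write
\[
\mathbb{E}[\mast(T_1,T_2)] \;=\; \sum_{s=1}^{n}\mathbb{P}[\mast(T_1,T_2)\geq s],
\]
fix any $\lambda'$ with $e\sqrt{2} < \lambda' < \lambda$, set $s_0 := \lceil \lambda'\sqrt{n}\rceil$, and split the sum at $s_0$. The contribution from $s \leq s_0$ is bounded trivially by $s_0 \leq \lambda'\sqrt{n}+1$, so everything reduces to showing that the tail $\sum_{s > s_0}\psi_{n,s}$ is negligible compared to $\sqrt{n}$.

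For the tail, combine the preceding lemma and proposition to get
\[
\psi_{n,s} \;\leq\; \binom{n}{s}\cdot\frac{2^{s-1}}{s!} \;\leq\; \frac{n^s\, 2^s}{2\,(s!)^2} \;=\; \frac{(2n)^s}{2\,(s!)^2}.
\]
Apply the Stirling lower bound $s! \geq \sqrt{2\pi s}\,(s/e)^s$ to obtain
\[
\psi_{n,s} \;\leq\; \frac{1}{4\pi s}\left(\frac{2ne^2}{s^2}\right)^{\!s}.
\]
For $s \geq s_0 \geq \lambda'\sqrt{n}$ one has $2ne^2/s^2 \leq 2e^2/(\lambda')^2 =: q < 1$, since $\lambda' > e\sqrt{2}$. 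Consequently $\psi_{n,s} \leq \frac{1}{4\pi s}\,q^{s}$ for every $s > s_0$, and the tail sum is bounded by the convergent geometric-type series $\sum_{s > s_0} q^s = O(q^{s_0}) = o(1)$ as $n \to \infty$.

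Putting the two pieces together,
\[
\mathbb{E}[\mast(T_1,T_2)] \;\leq\; \lambda'\sqrt{n} + 1 + o(1),
\]
and because $\lambda > \lambda'$, this is at most $\lambda\sqrt{n}$ for all sufficiently large $n$, which is the desired inequality. The only technical care required is the calibration of constants so that $2e^2/(\lambda')^2 < 1$ strictly, which is ensured by the gap $\lambda' > e\sqrt{2}$; this is the single nontrivial step, and it is where the threshold $e\sqrt{2}$ in the theorem originates. Everything else (the Stirling estimate and the tail-summation identity) is standard.
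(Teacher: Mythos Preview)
Your proof is correct and follows essentially the same approach as the paper: both combine the preceding lemma and proposition to bound $\psi_{n,s}$ by $\tfrac{1}{4\pi s}\bigl(\tfrac{2e^{2}n}{s^{2}}\bigr)^{s}$ via Stirling, and then exploit that this quantity is less than $1$ (and in fact exponentially small) once $s > e\sqrt{2}\,\sqrt{n}$. The only cosmetic difference is that you make the final step explicit via the tail-sum identity and an intermediate constant $\lambda'$, whereas the paper compresses this into the cruder inequality $\mathbb{E}[\mast]\leq \lambda\sqrt{n}+n\cdot\mathbb{P}[\mast>\lambda\sqrt{n}]$; your version is the more carefully written of the two.
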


\begin{proof}
We  explore the asymptotic behaviour of the quantity
$ \phi_{n,s} = {n \choose s} \frac{2^{s-1}}{s!}$.
Using the inequality $\binom{n}{s} \leq \frac{n^s}{s!}$ and Stirling's approximation, we have:
$$
\phi_{n,s}  \leq  \frac{1}{4 \pi s}  \left(  \frac{2 e^{2}n}{s^{2}} \right)^{s} \theta(s)
$$ 
where $\theta(s)  \sim 1$.  
Hence, $\phi_{n,s}$ tends to zero as an exponential
function of $n$ as $n \rightarrow \infty$.
Since $\phi_{n,s} \geq \psi_{n,s}$ we see that 
$\mathbb{P}[\mast(T_{1}, T_{2}) > \lambda \sqrt{n}]$ tends to zero as an exponential 
function of $n$.  Since $\mast(T_{1}, T_{2}) \leq n$, this implies that
$\mathbb{E}[\mast(T_1,T_2)]  \leq  \lambda  \sqrt{n}$.
\end{proof}


\section*{Acknowledgments}
Work on this project was started at the 2014 NSF-CBMS Workshop
on Phylogenetics at Winthrop University, partially supported by the US National Science Foundation (DMS 1346946).
Lam Ho was partially supported by the NSF Grant DMS 1264153 and the NIH Grant R01 AI107034.
Colby Long was partially supported by the US National 
Science Foundation (DMS 0954865).
Mike Steel was partially supported by the NZ Marsden Fund.  
Katherine St.~John was partially supported by the Simons Foundation.
Seth Sullivant was partially supported by the David and Lucille Packard Foundation and the US National Science Foundation (DMS 0954865). \blue{Finally, we thank the three anonymous reviewers for their helpful comments concerning an earlier version of this manuscript.}

\newpage

%

\end{document}